\theoremstyle{definition}
\newtheorem*{proposition}{Proposition}
\newtheorem{example}{Example}
\begin{document}

\title{Optimal experiment design for quantum state tomography}

\author{Jun Li}
\email{lijunwu@mail.ustc.edu.cn}
\affiliation{Beijing Computational Science Research Center, Beijing 100193, China}
\affiliation{Institute for Quantum Computing, University of Waterloo, Waterloo, Ontario, Canada}

\author{Shilin Huang} 
\email{eurekash.thu@gmail.com}
\affiliation{Institute for Quantum Computing, University of Waterloo, Waterloo, Ontario, Canada} 
\affiliation{Institute for Interdisciplinary Information Sciences, Tsinghua University, Beijing, 100084, China}

\author{Zhihuang Luo}
\affiliation{Beijing Computational Science Research Center, Beijing 100193, China}
\affiliation{Institute for Quantum Computing, University of Waterloo, Waterloo, Ontario, Canada}

\author{Keren Li}
\affiliation{Department of Physics, Tsinghua University, Beijing, 100084, China}
\affiliation{Institute for Quantum Computing, University of Waterloo, Waterloo, Ontario, Canada}
  
\author{Dawei Lu}
\affiliation{Institute for Quantum Computing, University of Waterloo, Waterloo, Ontario, Canada} 
\affiliation{Department of Physics, Southern University of Science and Technology, Shenzhen 518055, China}

\author{Bei Zeng} 
\email{zengb@uoguelph.ca}
\affiliation{Institute for Quantum Computing, University of Waterloo, Waterloo, Ontario, Canada} 
\affiliation{Department of Mathematics \& Statistics, University of Guelph, Guelph, Ontario, Canada}

\begin{abstract} 
Quantum state tomography is an indispensable but costly part of many quantum   experiments. Typically, it   requires measurements to be carried   in a number of different    settings on a fixed experimental setup. The collected data is often informationally overcomplete, with the amount of information redundancy depending on the particular set of measurement settings chosen. This raises a question about how should one   optimally take data  so that   the number of measurement settings necessary can be reduced. Here,   we cast this problem in terms of integer programming. For a given     experimental setup, standard integer programming algorithms allow us to  find the minimum set of readout operations that can realize a  target tomographic task.  We apply the method to certain basic and practical state tomographic problems   in nuclear magnetic resonance experimental systems.   The results show that, considerably less readout operations can be found using our technique than it was by using the previous greedy search strategy. Therefore, our method could     be helpful for simplifying     measurement schemes so as to minimize the  experimental effort.     
\end{abstract}

\pacs{03.67.Lx,76.60.-k,03.65.Yz}

\maketitle

\section{Introduction}

The problem of estimating an unknown quantum state is of fundamental importance in quantum physics \cite{PR04} and especially in the field of quantum information processing, such as  quantum computation \cite{NC00}, quantum cryptography \cite{GRTZ02},  and quantum system identification \cite{BY12}. Quantum state tomography aims to   determine the full state of a quantum system via a series   of quantum measurements. It has become an indispensable tool in almost any experimental physical  setup. 
The standard tomography procedure applied for complete reconstruction of a $d$-dimensional quantum state consists in   projecting the density operator with respect to at least $(d^2-1)$ measurement operators.  A reconstruction based
on Linear-squares inversion \cite{OWV97} or maximum-likelihood estimation \cite{TZERZ11}  is then used to calculate the
best-fit density matrix for the experimentally acquisited data set. Apparently, tomography is not an efficient process and can be extremely computationally costly for even modest-sized  systems.

In recent years, state tomography has been an increasingly   challenging task as the number of controllable qubits in quantum experiments is steadily growing.   
With the   rapid progress of experimental control techniques, the size
of quantum systems with entanglement or coherence prepared in the laboratory has already grown  to 8-10 qubits in photonic systems \cite{Guo11,Pan12,Pan10}, 12 qubits in nuclear magnetic resonance (NMR) systems   \cite{Ray06}
and to even 14 qubits  in ion traps \cite{Blatt11}. Needless to say, performing state estimation tasks on such systems is    tedious and time-consuming.  And   improved techniques for quantum state tomography would certainly impact a wide range of applications in experimental physics.
For example, Ref. \cite{H05} used hundreds of thousands of measurements and weeks of post-processing to get a maximum likelihood
estimate of an entangled state of 8 trapped-ion qubits. Later, this experiment was simplified as there was   put forward a much more economic tomographic scheme, which is based on the concept of mutually unbiased bases and promises to reduce about 95\% of the number of measurements required  \cite{KMFS08}. To give another example, in Ref. \cite{Guo16}  it was shown that, in reconstructing a 14-qubit state, using both smart choice of the state representation and parallel graphic processing unit  programming can  speed up the post-processing   by a factor of $10^4$. Besides of these technical improvements, there also exist various theoretical approaches that are devoted   to enhance the capability  of quantum state tomography \cite{GLFBE10,TWGKSW10,Cramer10}. Most of them either extract partial information or exploit some prior information about the state to be reconstructed.

In this paper, we are concerned with the design of the   measurement scheme in a tomographic experiment. Our study is primarily motivated by a   problem which is present in many experimental platforms, namely   state tomography     often involves informationally overcomplete measurements. The  reason  can be stated as follows. Normally, a  tomographic experiment consists of a series of different measurement settings, and from each single measurement setting a bunch of   data is recorded.  Here,  
a measurement setting  refers to a particular  configuration of the experimental measurement apparatus. For example, in   photonic systems one can tune the waveplates and polarizers   to make   arbitrary local polarization measurements, so   a setting means  the choice of one observable per qubit and repeated projective measurements in the observables' eigenbases \cite{AJK05}.  In NMR, a measurement setting   corresponds to taking a spectrum. Because NMR experiments are performed on a large ensemble of molecules, the expectation values of the  observables  (not necessarily compatible) can be read out from a single spectrum \cite{VC05}.    In both platforms, there could be considerable overlap between the experimental
outcomes acquired from different measurement settings, that is, there are redundant measurements. 
Therefore,   one would ask   about what is the minimum number of measurement settings that suffice to determine the state of the system. A judicious experiment  design would certainly help improve the efficiency of tomographic reconstruction.  The purpose of this paper is to address this question via integer programming techniques.  In the following, we shall first formulate the problem of optimal tomographic experiment design in terms of integer programming. And then, we   concentrate on optimizing the design of   readout pulse set  in the context of nuclear magnetic resonance.

\section{Optimal Experiment Design}

We restrict our consideration to the case of $n$ qubits---higher-dimensional systems can be treated similarly.
An $n$-qubit system's state is represented by a $2^n$-dimensional density matrix denoted as $\rho$, which is Hermitian, semipositive definite and has unit trace. A convenient and equivalent description of the quantum system is given by the Bloch  vector. Let $\left\{ B_k\right\}_{k=1}^{4^n-1}$ be some   orthonormal basis for the   space of traceless Hermitian operators satisfying that  for any $k,j=1,...,n$, there is $\operatorname{Tr} (B_k B_j)/2^n = \delta_{kj}$.  Then decomposed with respect to this basis, $\rho$ can be viewed as a point $\bm{r}$ in a $(4^n-1)$-dimensional real vector space: $\rho = I^{\otimes n}/2^n + \sum_{k=1}^{4^n-1} { \bm{r}_k B_k}$ with   $I$ being the $2$-dimensional identity matrix and $\bm{r}_k= \operatorname{Tr}(\rho B_k)/2^n$. 
Clearly, full tomography amounts to measuring all of the quantities $\left\{ \operatorname{Tr}(\rho B_1), ...,  \operatorname{Tr}(\rho B_{4^n-1})   \right\}$. 

Experimenters primarily work in two different   bases, the computational   basis and the product operator basis. In the computational basis the rows and columns of the density matrix $\rho$ are labeled by the binary expansion of their indices from $\left| 0 \cdots 0\right\rangle$ to $\left| 1 \cdots 1\right\rangle$. The   product operator basis, defined as $\mathcal{P}_n = \left\{ P_k\right\}_{k=1}^{4^n-1} = \left\{I, X, Y,Z\right\}^{\otimes n}/ \left\{ I^{\otimes n} \right\}$ where $X,Y,Z$ are the three Pauli matrices
\[ X =  \left(\begin{array}{cc}
  0 & 1 \\ 
  1 & 0
 \end{array}\right),
 Y =  \left(\begin{array}{cc}
  0 & -i \\ 
  i & 0
 \end{array}\right),
 Z =  \left(\begin{array}{cc}
  1 & 0 \\ 
  0 & -1
 \end{array}\right),
\]
is a  commonly used tool in describing     pulse control experiments. It  provides at the same time physical insight of the experimental setup (e.g., in NMR) and computational convenience \cite{SELBE84}.   In the following we shall work on the product operator basis $\mathcal{P}$, but there is no problem in extending our analysis to other bases.

In performing a tomographic experiment, we send  multiple copies of the state $\rho$ to our measurement apparatus. The apparatus can be configured in  different settings. 
Suppose that under a specific measurement setting, we can read out the information for the following set of operators:
$\mathcal{O} = \left\{ O_1, ...,   O_k, ...\right\}$, here $O_k \in \mathcal{P}$.
The experimental tomography procedure   employs  a series of  measurement settings, each corresponding to observation of a different set of operators.
Here, the switch between measurement settings is implemented through either changing the configuration of the detectors or adding a unitary readout   operation   before data acquisition.  This can be readily seen from the equality $\operatorname{Tr}(U \rho U^\dag E) = \operatorname{Tr}(  \rho  U E U^\dag)$, where $E$ is an arbitrary observable and $U \in SU(2^n)$.    For instance, in order to   measure the three Cartesian components of a spin, and if we can only observe $X$ and $Y$ in one experimental setting, then we will need two readout operations, which can be selected from the set $\left\{I, R_x, R_y \right\}$, here $R_x$, $R_y$ is the $\pi/2$ rotation about $x$, $y$ axis respectively. 

Now suppose   we have the following experimentally available set of readout operations:
$\mathcal{U} = \left\{ U_1, ... , U_j,...\right\}$.
Denote $\mathcal{S} = \left\{S_1,..., S_j,... \right\}$ where $S_j$ corresponds to the set of measurement operators generated through    $U_j$: $S_j = \left\{ U_j O_k U_j^\dag | O_k \in \mathcal{O} \right\}$. We shall assume that $S_j \subseteq \mathcal{P}$ for any $j$. Here is some  abuse of notation as actually we   should  ignore   global phase and coefficient.
Then $\mathcal{S}$ is a collection of  $\left| \mathcal{U} \right|$  subsets of $\mathcal{P}$, each containing $\left| \mathcal{O} \right|$ elements.
Clearly, to ensure   full state tomography, a necessary condition is that    $\mathcal{P}$ should be  covered by $\mathcal{S}$, that is, $
\mathcal{P} = \bigcup_j S_j$.
Now, we can state the central   problem of this work, that is,   to identify the smallest sub-collection of 
$\mathcal{S}$ whose union equals $\mathcal{P}$. More formally, we are considering a standard set cover problem, which we denote by  $\mathbf{P}(\mathcal{P},\mathcal{O},\mathcal{U})$: given $\mathcal{P}$, $\mathcal{O}$, and $\mathcal{U}$, we want to   select a subset of readout operations $\left\{ U_j \right\} \subseteq \mathcal{U}$ with its number of elements as small as possible and such that  
$\mathcal{P}$ is covered by the set $\left\{ U_j O_k U_j^\dag | O_k \in \mathcal{O}, \left\{ U_j \right\} \subseteq \mathcal{U} \right\}$. Fig. \ref{setcover} shows a simple instance of the problem.

\begin{figure}[t]
\begin{center}
\includegraphics[width=0.65\linewidth]{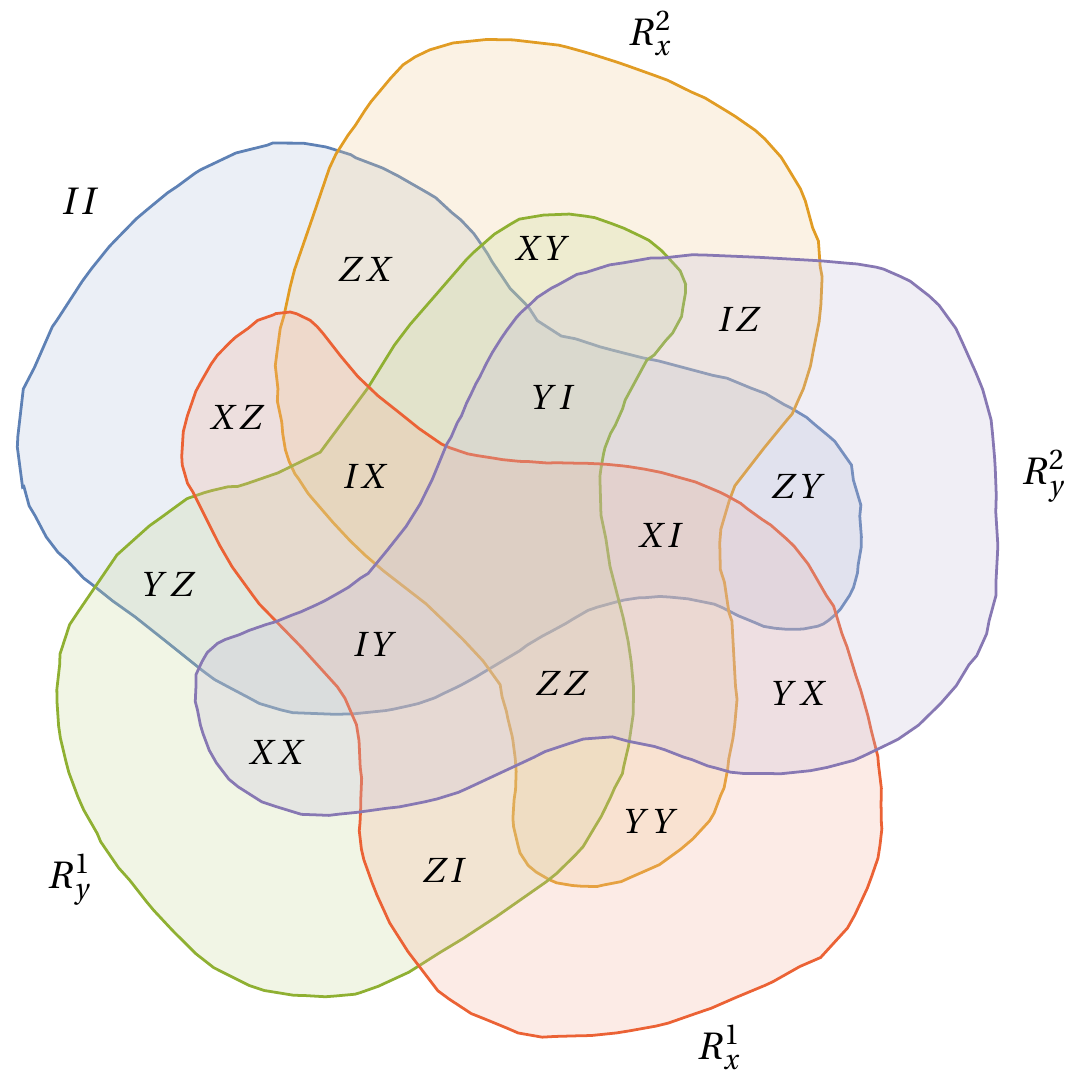}
\end{center}
\caption{Venn diagram visualizing the set cover problem for a   two-qubit   state tomography task. Here the readout operations are restricted to   identity operation and single-qubit rotations. This is a very simple instance occurred when we want to do state tomography for a homonuclear two-spin system in NMR. Clearly, any and no less than four of the five sets suffice to cover the entire measurement basis.}
\label{setcover}
\end{figure}

%

Set cover problem is known to be   NP-hard in general, meaning that   to find an efficient algorithm that can solve it  in reasonable time is unlikely. It is worthwhile to study heuristics for solving the problem with the goal of obtaining a performance guarantee 􏰏or approximation guarantee􏰐 on the heuristic􏰃.  Practically, greedy  strategy is   widely used  for set cover problems. Greedy algorithm proceeds according to a simple rule:  in each step, choose the set $S_j$ containing the largest number of uncovered elements. The algorithm ends until all elements of $\mathcal{P}$ are covered.   
Ref. \cite{LM04} has   exploited this method   in designing readout pulse set in NMR. Note that greedy algorithm generally does not yield the optimal result. Different choices for the first few readout operations in the iteration yield lists that are slightly longer or shorter than those shown. It is in essence an approximation algorithm, which achieves an approximation ratio of $\Theta(\ln |\mathcal{P}|)$ \cite{GT15}. It can even be  shown that no polynomial-time approximation algorithm can achieve a  much better approximation bound \cite{Fei98}.

Here, we attempt to find the optimal solution  for relatively small-sized quantum systems. To this end, we resort to integer linear programming formulation of the   set cover problem \cite{Vazirani04}. Let $x$ denote a $|\mathcal{U}|$-element column vector, in which each  element $x_j$ is a zero-one variable. The intention is that $x_j = 1$ iff set $S_j$ is chosen in the optimal solution. Let $f(x) = \sum_j c_j x_j$ denote the cost function, where $c_j > 0$ is the cost corresponding to  the  choice of $S_j$.  In the case of minimum set cover problem, the cost function is just $f(x) = \sum_j  x_j = \left\| x \right\|_1$. Let $A$ be a $(4^n-1) \times \left| \mathcal{U} \right|$ dimensional matrix with its entries given by $A_{kj} = 1$ ($k=1,...,4^n-1$) if $P_k \in S_j  $, and 0 otherwise. Now we have the following zero-one integer programming problem:
\begin{align}
\min \quad &  \left\| x \right\|_1,  \nonumber \\
\operatorname{s.t.} \quad & Ax \ge 1,  \nonumber \\
& x_j \in \left\{0,1\right\}.  \nonumber
\end{align}
There are a variety of algorithms that can be used to solve integer linear programs exactly, which we do not intend to expand here. Interested readers are referred to  \cite{CBD09, Margot10} for the basics.  
For now, we make several discussions:

(1) \emph{Choose readout operations}. For a quantum system that allows for universal control, readout operation $U$ can be chosen from the   Clifford group $\mathcal{C}$. This is due to that the Clifford group is the normalizer of the Pauli group \cite{Gottesman97}, i.e., for any $U$ a Clifford operation and $O$ a Pauli observable, there is that if ignoring global phase and coefficient,  $U O U^\dag$ gives again a Pauli observable. However,   the size of Clifford group is $| \mathcal{C}(n)| = 8 \cdot \prod_{k=1}^n 2(4^k-1)4^k$, which makes the corresponding integer programming problem soon become  too huge to     handle. Therefore, it is difficult to consider the entire Clifford group in the integer programming approach. More realistically, we would restrict to just    local operations. But notice that in many practical cases   the set of single qubit rotations is not     sufficient for covering the whole measurement basis, and then local Clifford operations should be considered.

(2) \emph{Cost function}. Integer programming allows to  consider  different   cost functions. In practice, more often
than not, technical constraints permit only a nonideal set of measurements. For example, the qubits can be individually addressed whereas nonlocal quantities can not be measured directly. It is also very common that nonlocal operations are less accurate than local operations. In such cases, we would prefer to choose local operations, and this can be achieved simply through assigning low costs to these preferred operations.

(3) \emph{Symmetry Consideration}. Many practical tomographic instances contain a great deal of symmetry. One direct consequence is that the optimal solution is not unique. A very useful technique that allows substantial reduction of the amount of computation required in running integer programming algorithms is to exploit the symmetry of the problem considered \cite{Ostrowski11, Margot10}.  It is desirable to use   professional softwares (e.g., \cite{Gurobi}) to run integer programming algorithms as they have taken into account of the symmetry issue and so can behave much faster. 

\section{NMR Tomography}
We first briefly describe the basics of state tomography experiments in NMR \cite{Lee02}.
We consider weakly coupled liquid-state NMR systems. An NMR sample is placed in a strong static magnetic field. The direction of the static field is, by convention,   defined as the $z$ axis.   The system Hamiltonian takes the following form
\begin{equation}
H = \sum\limits^n_{k =1} {{\Omega _k} Z_k/2}  + \pi \sum\limits^n_{k<j} {{J_{kj}} Z_k Z_j/2},
\label{H}
\end{equation}
where $\Omega_k$ is the precession frequency of the $k$-th spin under the static field, and $J_{kj}$ is the coupling between the $k$th and $j$th spin.
In   NMR   experiment, the sample is wounded with a detection coil. The precessing magnetization
of the sample is detected by the coil and constitutes the free induction decay (FID).  The induced signal is the sum of a number of oscillating waves of different frequencies, amplitudes, and phases. In the spectrometer, this is recorded using two orthogonal detection channels along the $x$- and $y$-axes, known as quadrature detection. The FID is then subjected to Fourier transformation, and the resulting spectral lines are fit, yielding a set of measurement data. 

More precisely, let $\rho$ denote the state at the start of the sampling stage, then   the measured time-domain signal $F(t)$ resulted from the rotating bulk magnetization is essentially a pair of ensemble averages: 
\begin{align}
F(t) & = \operatorname{Tr}\left[ e^{-i Ht} \rho e^{iHt} \cdot \sum_m \left(X_m + i Y_m \right) \right]  \nonumber \\
& = \operatorname{Tr}\left[ \rho \cdot e^{iHt} \sum_m \left(X_m + i Y_m \right) e^{-i Ht} \right].   
\end{align}
From the expression,  the record of the FID signal $F(t)$ can    be thought of  as that the measured operators are $\mathcal{O} = e^{iHt} \sum_k \left(X_k + i Y_k \right) e^{-i Ht}$, which can be calculated easily when the Hamiltonian $H$ is specified. Because $H$  is composed of $Z$- and $ZZ$- terms, the measurement operator set $\mathcal{O}$ consists of only single-quantum coherence operators. When it is desired to measure other operators,    readout pulses should be appended to the experiment.

%


%
 
Now, we give several examples showing how our method developed in the previous section applies to   NMR state tomography.

(1) \emph{Tomography via a single probe qubit}. In certain circumstances, we can do better than the   numerical optimization: we can write down an analytic expression for an optimal readout scheme. Here, we develop a provably optimal scheme, where we intend to perform tomography via   just a single  probe qubit. Of course, to make the scheme work, two conditions must be assumed in advance: (i) the probe qubit is coupled to   each of the other $n-1$ qubits; (ii) the multiplets corresponding to the probe qubit can be well resolved. As we have described,   the set of measurement operators are just single-quantum coherences, so there is
\begin{equation}
\mathcal{O} =   \left\{X, Y \right\} \otimes \left\{ I, Z \right\}^{\otimes n-1}.  \nonumber
\end{equation} 
As such, we would transfer the information of   other qubits     to the probe qubit before detection.
Let $W_{kj}$ denote the SWAP operation between qubit $k$ and $j$. Our candidate   readout operations will be
\begin{equation}
\mathcal{U} =   \left\{ V   W_{1j} : V \in \left\{I, R_x, R_y \right\}^{\otimes n}; j=1,...,n \right\}.   
\label{U}
\end{equation}
Let  $f_n$ denote the number of experiments used. Now we show that the lower bound for the number of experiments necessary to constructing the density operator is  

\begin{proposition}
$f^*_n = (3^n+1)/2$.
\end{proposition}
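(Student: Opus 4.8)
The plan is to split the proposition into a lower bound, showing that every valid scheme needs at least $(3^n+1)/2$ experiments, and a matching construction, showing that this many suffice; together these give the optimality asserted by $f^*_n$. The engine of both halves is a precise description of what a single experiment can read out. Fixing an experiment $U_j = V W_{1j}$ with $V = v_1 \otimes \cdots \otimes v_n$ and $v_i \in \{I, R_x, R_y\}$, I would first conjugate the source operators $\mathcal{O} = \{X,Y\} \otimes \{I,Z\}^{\otimes n-1}$ by $W_{1j}$, which moves the $\{X,Y\}$ factor onto one distinguished qubit and leaves a $\{I,Z\}$ factor on every other qubit. Applying $V$ then acts coordinatewise: on the distinguished qubit the two inputs go to two distinct non-identity Paulis (so that coordinate is never $I$ and takes exactly two values), while on each remaining qubit the input $\{I,Z\}$ is sent to $\{I,c_i\}$ for a single non-identity Pauli $c_i$ determined by $v_i$. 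Hence $S_j$ consists exactly of the Pauli strings that are non-identity with one of two allowed values in the distinguished coordinate and lie in $\{I,c_i\}$ in every other coordinate.

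For the lower bound I would restrict attention to the full-weight strings, the $3^n$ elements of $\{X,Y,Z\}^{\otimes n}$, all of which lie in $\mathcal{P}_n$. By the description above, a full-weight string can belong to $S_j$ only if every non-distinguished coordinate equals the forced value $c_i$; since those are pinned down, only the distinguished coordinate is free and ranges over two values, so each experiment reads out exactly two full-weight strings. Covering all $3^n$ of them therefore requires at least $\lceil 3^n/2 \rceil$ experiments, and as $3^n$ is odd this equals $(3^n+1)/2$, giving $f_n \ge (3^n+1)/2$.

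To show the bound is attained I would exhibit a cover using exactly $(3^n+1)/2$ experiments. Since any experiment reading out two full-weight strings pairs two strings differing in a single coordinate, the full-weight requirement is equivalent to choosing a near-perfect matching of the Hamming graph on $\{X,Y,Z\}^{\otimes n}$, one vertex being left over and absorbed by a final experiment. The delicate point, which I expect to be the main obstacle, is to select this matching so that the very same experiments also cover all lower-weight strings: by the structure of $S_j$, replacing coordinates of a covered full-weight string by $I$ yields further covered strings, so the task is to arrange the matching so that these induced substrings exhaust the rest of $\mathcal{P}_n$. I would carry this out by induction on $n$, exploiting the recursion $f_n = 3 f_{n-1} - 1$: take three copies of an optimal $(n-1)$-qubit scheme indexed by the value of the new coordinate and merge them so as to save exactly one experiment while patching the newly appearing lower-weight strings. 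Checking that the merge both keeps each $S_j$ realizable as a genuine choice of $V$ and $j$ and leaves nothing uncovered is the part requiring real care, with the easily written $n=1$ and $n=2$ covers serving as base cases.
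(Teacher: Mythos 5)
Your lower bound half is correct and is essentially the paper's own argument, carried out in more detail: since every $U\in\mathcal{U}$ preserves Pauli weight (in your language, the distinguished coordinate carries exactly two non-identity values and every other coordinate is confined to $\{I,c_i\}$), each experiment observes exactly two of the $3^n$ full-weight strings, forcing at least $\lceil 3^n/2\rceil=(3^n+1)/2$ experiments.

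The upper bound half, however, has a genuine gap: you never actually produce the cover, and the plan you sketch runs into concrete trouble. Three copies of an optimal $(n-1)$-qubit scheme indexed by the value $c\in\{X,Y,Z\}$ of the new coordinate necessarily keep the distinguished (probe-swapped) qubit among the old $n-1$ qubits, so none of those $3f_{n-1}$ experiments can cover the three strings $I^{\otimes(n-1)}\otimes P$ with $P\in\{X,Y,Z\}$, which are readable only when the new qubit itself is distinguished; since one such experiment covers at most two of these three strings, the naive merge costs $3f_{n-1}+2$ rather than $3f_{n-1}-1$, and recovering the saving would require exploiting the triple coverage of the new-coordinate-$I$ strings to delete or redesign experiments---precisely the verification you defer as "the part requiring real care." The paper sidesteps this delicacy with a different, asymmetric recursion: split $\mathcal{P}$ by the first coordinate; the block $\{X,Y\}\otimes\{I,X,Y,Z\}^{\otimes(n-1)}$ is covered by the $3^{n-1}$ swap-free operations $I\otimes\{I,R_x,R_y\}^{\otimes(n-1)}$ (one per choice of $(c_2,\dots,c_n)$, and this count is also forced, by the weight argument applied to the last $n-1$ qubits), while the block $\{I,Z\}\otimes\{I,X,Y,Z\}^{\otimes(n-1)}$ reduces, after fixing $v_1=I$ so that qubit 1 contributes $\{I,Z\}$ for free, to the $(n-1)$-qubit instance with qubit 2 as probe; this yields $f_n=f_{n-1}+3^{n-1}$ with nothing left to patch. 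Until you exhibit and verify your merge---or switch to such a direct split---only the lower-bound half of the proposition is proved.
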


\begin{proof}
The correctness of the formula  when $n=1$ is trivial. For $n > 1$, the idea to the proof is to find a recursive relation. 

First we show that $(3^n+1)/2$ is a lower bound. Notice that any $U \in \mathcal{U}$ can not change the weight of any Pauli element. So each experiment only gives observation results of two weight-$n$ Pauli elements. Since there are $3^n$ weight-$n$ Pauli elements in $\mathcal{P}$, so at least  $(3^n+1)/2$ readout operations are needed.

Now we show that this lower bound can be achieved. We provide a constructive way of finding an optimal solution. Divide   $\mathcal{P}$ into two parts: $\mathcal{P} = \mathcal{P}^{(1)} \cup \mathcal{P}^{(2)}$, where
\begin{align}
\mathcal{P}^{(1)} & = \left\{X,Y\right\} \otimes \left\{I, X, Y, Z\right\}^{\otimes (n-1)}, \nonumber \\
\mathcal{P}^{(2)} & = \left\{I,Z\right\} \otimes \left\{I, X, Y, Z\right\}^{\otimes (n-1)}.  \nonumber
\end{align}
To cover $\mathcal{P}^{(1)}$, we select readout operations from 
\[\mathcal{U}^{(1)} = I \otimes \left\{I, R_x, R_y\right\}^{\otimes (n-1)};\] 
to cover $\mathcal{P}^{(2)}$, we select readout operations from 
\[\mathcal{U}^{(2)} = \left\{ V   W_{1j} : V \in \left\{I, R_x, R_y \right\}^{\otimes n}; j=2,...,n \right\}.\] 
As $\mathcal{U}^1$ and $\mathcal{U}^2$ has no intersections, we get two seperate subproblems: 

The subproblem $\mathbf{P}(\mathcal{P}^{(1)},\mathcal{O},\mathcal{U}^{(1)})$ is equivalent to problem $\mathbf{P}(\left\{I,X,Y,Z\right\}^{\otimes (n-1)},\left\{ I, Z \right\}^{\otimes n-1},\left\{I, X, Y\right\}^{\otimes (n-1)})$. Since there are $3^{(n-1)}$ weight-$(n-1)$ Pauli elements and each experiment can only measure one of them,  thus there must need $3^{(n-1)}$ experiments.

The subproblem $\mathbf{P}(\mathcal{P}^{(2)},\mathcal{O},\mathcal{U}^{(2)})$ can be reduced to 
\begin{align}
\mathbf{P} & \left( \left\{I,Z\right\} \otimes \left\{I,X,Y,Z\right\}^{\otimes (n-1)}, \right. \nonumber \\
& \left\{I,Z\right\} \otimes \left\{X,Y \right\} \otimes \left\{ I, Z \right\}^{\otimes n-2},  \nonumber \\
& \left. I \otimes \left\{ V   W_{2j} : V \in \left\{I, R_x, R_y \right\}^{\otimes n-1}; j=2,...,n \right\} \right).  \nonumber
\end{align}
That is, any solution to the latter problem can be mapped to a solution to the former by simply changing the probe qubit from 2 to 1.
Note that the latter problem is essentially $\mathbf{P}(\mathcal{P}(n-1),\mathcal{O}(n-1),\mathcal{U}(n-1))$.
Therefore, our construction has a recursive relation: $f_n = f_{n-1} + 3^{n-1}$. Starting with $f^*_1 = 2$, we then get $f_n = (3^n+1)/2$. Together with the   bound analysis at the beginning, we conclude that the  lower bound $(3^n+1)/2$ is exact.
\end{proof}

\begin{table}[b]
{\renewcommand{\arraystretch}{1.5}\footnotesize
\begin{center}
\begin{tabular}{ccl}
\hline
\hline
{\small $n$}  & {\small $f^*_n$}   & {\small A solution instance} \\
\hline
1  & 2   &  $I$, $R_x$
\\
\hline
  2  & 4   &  $II$, $R_x^2$, $R_y^2$, $R_x^1 R_x^2$   
\\
\hline
3  & 7 & $III$, $R_y^3$, $R_y^1$, $R_y^2 R_y^3$, $R_x^1 R_y^2 R_x^3$, $R_x^1 R_x^2 R_y^3$, $R_x^1 R_x^2 R_x^3$  
\\
\hline
4  & 15 &  $IIII$,  $R_x^4$,  $R_x^1R_x^4$,  $R_x^1R_y^4$,  $R_y^1R_y^2$,  $R_x^2R_y^3R_y^4$,  $R_y^2R_x^3R_y^4$, \\
& & $R_y^2R_y^3R_y^4$,  $R_x^1R_x^2R_x^3$,  $R_x^1R_y^2R_y^3$,  $R_y^1R_x^3R_x^4$,  $R_y^1R_y^2R_y^3$, \\
& & $R_x^1R_x^2R_x^3R_x^4$,  $R_y^1R_x^2R_x^3R_x^4$,  $R_y^1R_x^2R_y^3R_y^4$
\\
\hline
  5  & 33 &  
$IIIII$,  $R_x^5$,  $R_x^4R_y^5$,  $R_y^4R_y^5$,  $R_x^3R_y^5$,  $R_y^3R_y^5$,  $R_x^2R_x^3R_x^4$, \\
& & $R_y^2R_y^3R_y^4$,  $R_x^1R_y^3R_x^4$,  $R_x^1R_x^2R_y^4$,  $R_x^1R_y^2R_y^5$,  $R_x^1R_y^2R_x^3$, \\
& &  $R_y^1R_x^3R_y^4$,  $R_y^1R_y^3R_y^5$,  $R_y^1R_x^2R_y^5$,  $R_y^1R_x^2R_y^3$,  $R_y^1R_y^2R_y^5$, \\
& & $R_y^1R_y^2R_x^4$,  $R_x^2R_x^3R_x^4R_x^5$,  $R_y^2R_y^3R_x^4R_x^5$,  $R_y^2R_y^3R_y^4R_x^5$, \\
& & $R_x^1R_y^3R_x^4R_x^5$,  $R_x^1R_x^2R_y^4R_x^5$,  $R_x^1R_y^2R_x^3R_x^5$,  $R_y^1R_x^3R_y^4R_x^5$, \\
& & $R_y^1R_x^2R_y^3R_x^5$,  $R_y^1R_y^2R_x^4R_x^5$,  $R_x^1R_x^2R_x^3R_x^4R_y^5$, \\
& & $R_x^1R_x^2R_x^3R_y^4R_y^5$, $R_x^1R_x^2R_y^3R_x^4R_y^5$,  $R_x^1R_x^2R_y^3R_y^4R_y^5$, \\
& & $R_y^1R_y^2R_x^3R_x^4R_y^5$,  $R_y^1R_y^2R_x^3R_y^4R_y^5$
\\
\hline
\hline
\end{tabular}
\end{center} 
}
\caption{Examples of optimal pulse set that yields complete tomography on $n$-spin homonuclear systems for $n$ between 1 and 5.}
\label{optimal}
\end{table}

\begin{example}
Iodotrifluroethylene (C$_2$F$_3$I)  dissolved in d-chloroform. This molecule was used as a quantum information processor in Ref. \cite{Luo16,Li16}. See the following figure  for the sample's molecular structure:
\begin{center}
  \begin{tikzpicture}[scale=0.01]

  \node at (-50,0) {\footnotesize  C$_1$};
  \node at (50,0) {\footnotesize   C};
  \node at (-115,-65) {\footnotesize  F$_2$};
  \draw  (-25,5) -- (25,5);
  \draw  (-25,-5) -- (25,-5);
  \draw (-65,-25) -- (-90,-50);
  \draw  (-65,25) -- (-90,50) node [anchor=south east] {\footnotesize  F$_1$};
  \draw   (65,-25) -- (90,-50) node [purple,anchor=north west] {\footnotesize I};
  \draw  (65,25) -- (90,50) node [anchor=south west] {\footnotesize  F$_3$};

  \end{tikzpicture} 
\end{center}
Here, the  $^{13}$C (C$_1$) nucleus and the three $^{19}$F nuclei (F$_1$, F$_2$ and F$_3$) constitute a four-qubit system. For this molecule, only C$_1$ can be adequately observed. Therefore  to observe those operators that are relevant to the state of the fluorines it is necessary to   swap the state between them and the carbon before observation. So readout operations can be chosen from the set in Eq. (\ref{U}). According to  our proposition, the minimum number of readout single-qubit rotation pulses is 41.
\end{example}

(2) Now we consider   homonuclear systems, assuming that  all peaks on the spectrum can be adequately resolved and observed. In homonuclear systems, all the multiplets are experimentally observed on the same spectrometer channel, and so  one readout operation, namely one spectrum, contains $n$ well-resolved multiplets and yields $n 2^n$ expansion coefficients.  In such an   ideal case,  we can observe all the single-quantum transition operators from the experimental spectrum. In other words,   we will have the following collection of Pauli measurements that can be accessed in a single experiment
\begin{equation}
\mathcal{O} = \bigcup_{m=1}^{n} \left( \left\{ I,Z \right\}^{\otimes (m-1)} \otimes \left\{ X_m, Y_m \right\} \otimes \left\{ I,Z \right\}^{\otimes n-m} \right).  \nonumber
\end{equation} 
One can choose the following   readout operation set
\begin{equation}
\mathcal{U} = \left\{I, R_x, R_y \right\}^{\otimes n}.  \nonumber
\end{equation} 
Ref. \cite{LM04} referred $\mathcal{U}$ as  the canonical tomographic pulse set, and employed   greedy algorithm to search for smaller readout pulse set. It turns out that, unlike the previous example, it is hard  to   analytically construct  an optimal scheme for the current problem. Here we  resort to integer programming. Fig. \ref{result} shows our running results, which are listed together with  the methods of canonical tomography and greedy search. From the figure, we can see that for a 6 qubit-system, using the measurement scheme found by integer programming would save around 20\% of the experiment time compared with that if we use greedy method.   Moreover, integer programming allows   to confirm that the obtained solution is indeed optimal for system's size up to 5; see Table \ref{optimal}. These results clearly demonstrate the usefulness of the integer programming technique, that we could get  appreciable improvements   over what has been obtained before.


\begin{figure}[t]
\begin{center}
\includegraphics[width=0.85\linewidth]{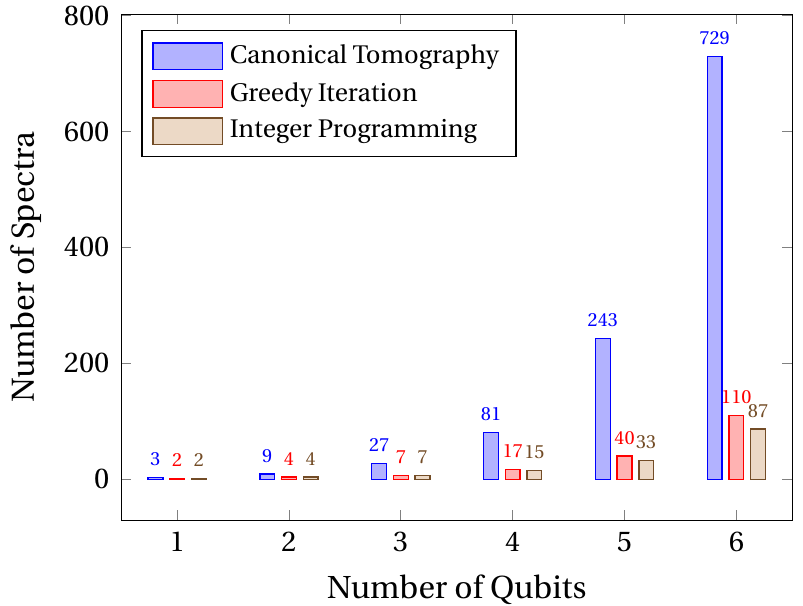}
\end{center}
\caption{Comparison of canonical tomography, results of iterative greedy algorithm in Ref. \cite{LM04}, and results of integer programming approach for complete tomography on homonuclear $n$-qubit systems with $n$ between 1 and 6.}
\label{result}
\end{figure}

\section{Summary}

Quantum state tomography plays an essential role in many quantum information processing experiments. Developing techniques that allow simplification of    state tomography experiments  is particularly pressing in regard of the situation that ever larger sized quantum devices are emerging in laboratories. In this work, we   have studied  the application of integer programming   to  the problem of  reducing     the number of required measurement settings  and the computational complexity of data processing.  The presented   test examples confirm
the usefulness of   the  integer programming approach. Our method can be easily incorporated into other existing tomographic strategies  \cite{Jackson13,NSPW10,Ma16}. Also, it is straightforward to generalize our results  to quantum process tomography experiments.
It is the hope that   integer programming formulation, as developed in this work, will become a useful tool in future tomographic experiments   for increasingly large quantum systems, overcoming the roadblock against further development in quantum technologies.

\emph{Acknowledgments}.  
Jun Li is supported by the National Basic Research Program of China (Grants No. 2014CB921403, No. 2016YFA0301201, No. 2014CB848700 and No. 2013CB921800), National Natural Science Foundation of China (Grants No. 11421063, No. 11534002, No. 11375167 and No. 11605005), the National Science Fund for Distinguished Young Scholars (Grant No. 11425523), and NSAF (Grant No. U1530401). Bei Zeng is supported by NSERC and CIFAR.


\begin{thebibliography}{28}
 
\bibitem{PR04} M. Paris and J. \u{R}eh\'{a}\u{c}ek, eds.,  \emph{Quantum State Estimation}, Lect. Notes Phys. \textbf{649} (Springer, 2004).

\bibitem{NC00} M. A. Nielsen and I. L. Chuang,  \emph{Quantum Computation and Quantum Information} (Cambridge University Press, Cambridge, 2000).

\bibitem{GRTZ02} N. Gisin, G. Ribordy, W. Tittel, and H. Zbinden,  {Quantum cryptography}, Rev. Mod. Phys. \textbf{74}, 145 (2002).

\bibitem{BY12} D. Burgarth and K. Yuasa, Quantum System Identification, Phys. Rev. Lett.   \textbf{108}, 080502 (2012).

\bibitem{OWV97} T. Opatrn\'{y}, D.-G. Welsch, and W. Vogel, Least-squares inversion for density-matrix reconstruction, Phys. Rev. A \textbf{56}, 1788 (1997).

\bibitem{TZERZ11} Y. S. Teo, H. Zhu, B.-G. Englert, J. \u{R}eh\'{a}\u{c}ek, and Z. Hradil, Quantum-State Reconstruction by Maximizing Likelihood and Entropy,  Phys. Rev. Lett. \textbf{107}, 020404 (2011).



\bibitem{Guo11} Y.-F. Huang, B.-H. Liu, L. Peng, Y.-H. Li, L. Li, C.-F. Li, and G.-C. Guo,  {Experimental generation of an eight-photon Greenberger-Horne-Zeilinger state}, Nature Commun. \textbf{2}, 546 (2011).


\bibitem{Pan12} X.-C. Yao, T.-X. Wang, P. Xu, H. Lu, G.-S. Pan, X.- H. Bao, C.-Z. Peng, C.-Y. Lu, Y.-A. Chen, and J.-W. Pan, Observation of eight-photon entanglement, Nat. Photonics \textbf{6}, 225 (2012).

\bibitem{Pan10} W.-B. Gao, C.-Y. Lu, X.-C. Yao, P. Xu, O. G\"{u}hne, A. Goebel, Y.-A. Chen, C.-Z. Peng, Z.-B. Chen, and J.- W. Pan, Experimental demonstration of a hyper-entangled ten-qubit Schr\"{o}dinger cat state, Nat. Phys. \textbf{6}, 331 (2010).

\bibitem{Ray06} C. Negrevergne, T. S. Mahesh, C. A. Ryan, M. Ditty, F. Cyr-Racine, W. Power, N. Boulant, T. Havel, D. G. Cory, and R. Laflamme, Benchmarking Quantum Control Methods on a 12-Qubit System, Phys. Rev. Lett. \textbf{96}, 170501 (2006).

\bibitem{Blatt11} T. Monz, P. Schindler, J. T. Barreiro, M. Chwalla, D. Nigg, W. A. Coish, M. Harlander, W. H\"{a}nsel, M. Hennrich, and R. Blatt, 14-qubit entanglement: Creation and coherence, Phys. Rev. Lett. \textbf{106}, 130506  (2011).


\bibitem{H05}  H. H\"{a}ffner, W. H\"{a}nsel, C. F. Roos, J. Benhelm, D. Chek-al-kar, M. Chwalla, T. K\"{o}rber, U. D. Rapol, M. Riebe, P. O. Schmidt, C. Becher, O. G\"{u}hne, W. D\"{u}r, and R. Blatt, Scalable multiparticle entanglement of trapped ions, Nature (London) \textbf{438}, 643 (2005).

\bibitem{KMFS08} A. B. Klimov, C. Mu\~{n}oz, A. Fern\'{a}ndez, and C. Saavedra, Optimal quantum-state reconstruction for cold trapped ions, Phys. Rev. A \textbf{77}, 060303 (2008).


\bibitem{Guo16} Z. Hou, H.-S. Zhong, Y. Tian, D. Dong, B. Qi, L. Li, Y. Wang, F. Nori, G.-Y. Xiang, C.-F. Li and G.-C. Guo, Full reconstruction of a 14-qubit state within four hours, New J. Phys. \textbf{18}, 083036 (2016).


\bibitem{Cramer10} M. Cramer, M. B. Plenio, S. T. Flammia, R. Somma, D. Gross, S. D. Bartlett, O. Landon-Cardinal, D. Poulin, and Y.-K. Liu, Efficient quantum state tomography,  Nature Comm. \textbf{1}, 149 (2010).



\bibitem{GLFBE10} D. Gross, Y.-K. Liu, S. T. Flammia, S. Becker, and J. Eisert, Quantum State Tomography via Compressed Sensing, Phys. Rev. Lett.  \textbf{105}, 150401 (2010).


\bibitem{TWGKSW10} G. T\'{o}th, W. Wieczorek, D. Gross, R. Krischek, C. Schwemmer, and H. Weinfurter, Permutationally Invariant Quantum Tomography, Phys. Rev. Lett. \textbf{105}, 250403 (2010).

\bibitem{AJK05} J. B. Altepeter, E. R. Jeffrey, P. G. Kwiat, Photonic State Tomography, Adv. At. Mol. Opt. Phys. \textbf{52}, 105 (2005).

\bibitem{VC05} L. M. K. Vandersypen and I. L. Chuang, NMR techniques for quantum control and computation, Rev. Mod. Phys. \textbf{76}, 1037 (2005).

\bibitem{SELBE84} O.W. S{\o}rensen. G. W. Eich,  M. H. Levitt,  G. Bodenhausen, R. R. Ernst, Product operator formalism for the description of NMR pulse experiments, Prog. Nucl. Magn. Reson. Spectrosc. \textbf{16}, 163 (1984).



\bibitem{LM04} G. M. Leskowitz and L. J. Mueller, State interrogation in nuclear magnetic resonance quantum-information processing, Phys. Rev. A \textbf{69}, 052302 (2004).

\bibitem{GT15} M. T. Goodrich and R. Tamassia,  \emph{Algorithm Design and Applications} (John Wiley \& Sons, 2015).

\bibitem{Fei98} U. Feige, A threshold of $\ln n$ for approximating set cover, J. ACM . \textbf{45}, 634 (1998).

\bibitem{Vazirani04} V. V. Vazirani, \emph{Approximation Algorithms} (Springer, 2004).

\bibitem{CBD09} D.-S. Chen, R. G. Batson, and Y. Dang, \emph{Applied Integer Programming: Modeling and Solution} (Wiley, 2009).

\bibitem{Margot10} F. Margot, in  \emph{50 Years of Integer Programming 1958–2008: From the Early Years to the State-of-the-Art}, edited by M. J{\"u}nger, T. Liebling, D. Naddef, G. Nemhauser, W. Pulleyblank, G, Reinelt, G. Rinaldi, L. Wolsey (Springer, 2010), pp. 647-686.


\bibitem{Gurobi}   I. Gurobi Optimization, ``Gurobi optimizer reference manual," (2016).

\bibitem{Gottesman97} D. Gottesman, Stabilizer Codes and Quantum Error Correction, Ph.D. thesis, California Institute of Technology, 1997, arXiv:quant-ph/9705052.

\bibitem{Lee02} J.-S. Lee, The quantum state tomography on an NMR system, Phys. Lett. A \textbf{305}, 349 (2002).


 
 
\bibitem{Ostrowski11} J. Ostrowski,  \emph{Symmetry in Integer Programming} (Proquest, Umi Dissertation Publishing, 2011).


\bibitem{Luo16} Z. Luo, C. Lei, J. Li, X. Nie, Z. Li, X. Peng, and J. Du, Experimental observation of topological transitions in interacting multispin systems, Phys. Rev. A \textbf{93}, 052116 (2016).

\bibitem{Li16} J. Li, R. Fan, H. Wang, B. Ye, B. Zeng, H. Zhai, X. Peng, and J. Du, Measuring out-of-time-order correlators on a nuclear magnetic resonance quantum simulator, arXiv:1609.01246 (2016).



\bibitem{Jackson13} T. B. Jackson, Master thesis, University of Guelph,  Quantum Tomography with Pauli Operators, 2013.


\bibitem{Ma16} X. Ma, T. Jackson, H. Zhou, J. Chen, D. Lu, M. D. Mazurek, K. A. G. Fisher, X. Peng, D. Kribs, K. J. Resch, Z. Ji, B. Zeng, and R. Laflamme, Pure-state tomography with the expectation value of Pauli operators, Phys. Rev. A \textbf{93}, 032140 (2016).



\bibitem{NSPW10} J. Nunn, B. J. Smith, G. Puentes, and I. A. Walmsley, Optimal experiment design for quantum state tomography: Fair, precise, and minimal tomography, Phys. Rev. A \textbf{81}, 042109 (2010).

 
\end{thebibliography}
\end{document}